\newtheorem{proposition}{Proposition}[section]
\begin{document}

\title{\textsc{on automorphism groups of networks}}
\author{\textsc{Ben D. MacArthur} \\ Bone and Joint Research Group, University of Southampton, \\ Southampton General Hospital, Southampton, SO16 6YD, UK \\ \\
\textsc{Rub\'en J. S\'anchez-Garc\'\i{}a}\footnote{Ben MacArthur and Rub\'en S\'anchez-Garc\'\i{}a contributed equally to this work.} \\ Mathematisches Institut, Heinrich-Heine Universit\"at \\ D\"usseldorf, Universit\"atsstr 1, 40225, D\"usseldorf, Germany \\ \\ \textsc{James W. Anderson} \\ School of Mathematics, University of Southampton,  \\ Southampton, SO17 1BJ, UK} 

\maketitle

\begin{abstract}
We consider the size and structure of the automorphism groups of a variety of empirical `real-world' networks and find that, in contrast to classical random graph models, many real-world networks are richly symmetric. We relate automorphism group structure to network topology and discuss generic forms of symmetry and their origin in real-world networks. 
\end{abstract}

\begin{center}
\textsc{keywords: Complex Network, Automorphism Group}
\end{center}

\newpage
\section{\label{sec:intro} Introduction}
The use of complex networks to model the underlying topology of `real-world' complex systems -- from social interaction networks such as scientific collaboration networks\cite{newman2,newman3} to biological regulatory networks\cite{albert3} and technological networks such as the internet\cite{siganos} -- has attracted much current research interest\cite{albertSM,newman,strogatz}. Previous studies have highlighted the fact that seemingly disparate networks often have certain features in common including (amongst others): the `small-world' property\cite{wattsSW}; the power-law distribution of vertex degrees\cite{barabasiPA}; and network construction from motifs\cite{milo}. 

Identification of universal structural properties such as these allows generic network properties to be decoupled from system-specific features. In this present work we consider the symmetry structure of a variety of real-world networks and find that a certain degree of symmetry is also ubiquitous in complex systems.

We consider network symmetry via the automorphism group of the underlying graph. Firstly, we identify `essential' network symmetries and use these symmetries to derive a natural direct product decomposition of the automorphism group into irreducible factors. This decomposition is \textit{per se} a very efficient way to handle large automorphism groups of real-world networks. We associate with each factor in this decomposition a \emph{symmetric subgraph} -- the subgraph on which the factor subgroup acts non-trivially -- and investigate the generic structure of symmetric subgraphs.

\section{\label{sec:auto}Network Automorphism Groups}
Mathematically, a \emph{network} is a graph, $\mathscr{G}=\mathscr{G}(V,E)$, with vertex set, $V$ (of size $N_\mathscr{G}$), and edge set, $E$ (of size $M_\mathscr{G}$) where vertices are said to be \emph{adjacent} if there is an edge between them. An \emph{automorphism} is a permutation of the vertices of the network which preserves adjacency. The set of automorphisms under composition forms a group, $\textrm{Aut}(\mathscr{G})$, of size $a_\mathscr{G}$\cite{bollobasMGT}. Throughout this discussion we shall let $\mathscr{G}$ refer to a generic network, and $G$ to a generic group. If the network is a multi-digraph, we remove weights and directions and consider the automorphism group of the \textit{underlying} graph.

Here, the \texttt{nauty} program\cite{mckay} -- which includes one of the most efficient graph isomorphism algorithms available\cite{foggia} -- is used to calculate the size and structure of the various automorphism groups.

Table \ref{auttable} gives the order of the automorphism group of some real-world complex networks, all of which are nontrivial. Since almost all large graphs (including, for example, the classical Erd{\"o}s-R{\'e}nyi random graphs) are asymmetric\cite{bollobasRG} this symmetry is somewhat unexpected and begs an explanation.

Many networks -- for example the internet and the world wide web -- are `growing'\cite{barabasiPA} (that is, new vertices are added to the network over time). Generically, any growth process in which allows for new vertices to be added to the network \emph{one at a time} naturally leads to a network with locally tree-like regions. Such locally tree-like areas are common in real-world networks and their presence is important because, while the majority of large graphs are asymmetric, it is common for large random \emph{trees} to exhibit a high degree of symmetry\cite{harary}, deriving from the presence of identical branches about the same fork. Thus we expect a certain degree of tree-like symmetry to be present in many real-world networks. In the following sections we determine the extent to which real-world symmetry is locally tree-like. We begin by considering the structure of network automorphism groups.

\begin{table}[t]
\begin{center}
\begin{tabular}[c]{l  c  c  c}
Network & $N_\mathscr{G}$ & $M_\mathscr{G}$ & $a_\mathscr{G}$ \\
\hline 
\hline
Human B Cell Genetic Interactions\cite{basso} & $5,930$ & $64,645$ & $5.9374 \times 10^{13}$ \\
\textit{C. elegans} Genetic Interactions\cite{zhong} & $2,060$ & $18,000$ & $6.9985 \times 10^{161}$ \\
BioGRID datasets\cite{biogrid}: \\
\hspace{2.15cm} Human & $7,013$ & $20,587$ & $1.2607 \times 10^{485}$ \\
\hspace{1.4cm} \textit{S. cerevisiae} & $5,295$ & $50,723$ & $6.8622 \times 10^{64}$ \\
\hspace{1.6cm} \textit{Drosophila} & $7,371$ & $25,043$ & $3.0687 \times 10^{493}$ \\
\hspace{1.15cm} \textit{Mus musculus} & $209$ & $393$ & $5.3481 \times 10^{125}$ \\
Internet (Autonomous Systems Level)\cite{caida} & $22,332$ & $45,392$ & $1.2822 \times 10^{11,298}$ \\
US Power Grid\cite{wattsSW} & $4,941$ & $6,594$ & $5.1851 \times 10^{152}$ \\
\\
\end{tabular}
\end{center}
\caption{\label{auttable}\small{The size of the automorphism group of some real-world networks. The size of the automorphism group of the largest connected component is given (to $5$ significant figures) which, in all cases, contains at least $93\%$ of the vertices in the network. Connected components were extracted using \texttt{Pajek}\cite{pajek}.}} 
\end{table}

\section{\label{sec:struct} Factorization of Automorphism Groups}
In this section we describe a computationally efficient factorization of large network automorphism groups -- which are often too large to allow direct analysis (see Table \ref{auttable}) -- into `irreducible building blocks'.

Consider the permutations of a set of $n$ points $X =\{ x_1, \ldots, x_n
\}$. The \emph{support} of a permutation $p$ is the set of points which $p$ moves, $\textup{supp}(p) = \{ x_i \, | \, p(x_i) \neq x_i \}$. Two permutations $p$ and $q$ are \emph{disjoint} if their supports are non-intersecting. If $p$ and $q$ are disjoint then they commute (with
respect to the composition of permutations). Similarly, two sets of permutations $P$ and $Q$ are
\emph{support-disjoint} if every pair of permutations $p \in P$ and
$q \in Q$ have disjoint supports.

Let $\mathscr{G}$ be a network with automorphism group $\textrm{Aut}(\mathscr{G})$. Let $S$ be a set of generators of $\textrm{Aut}(\mathscr{G})$. Suppose that we partition $S$ into $n$ support-disjoint subsets $S = S_1 \cup
\ldots \cup S_n$ such that each $S_i$ cannot itself be decomposed into smaller support-disjoint subsets. Call $H_i$ the subgroup generated by $S_i$. Since $S$ is a generating set and elements from different factors $H_i$, $H_j$ commute, this procedure gives a direct product decomposition:
\begin{equation} \label{decompbody}
\textrm{Aut}(\mathscr{G}) \cong H_1 \times H_2 \times \ldots \times H_n.
\end{equation}
Note that, in general, the choice of generators of a group is not unique and different choices of generating sets may give different decompositions. Thus, for this decomposition to be well-defined, we need to show that it is unique and the factors in Eq. (\ref{decompbody}) are `irreducible'; that is, they cannot be written as $K \times L$ with $K$ and $L$ support-disjoint subgroups.

A group $G$ is \emph{support-indecomposable} if it cannot be written as $K \times L$ with $K$ and $L$ support-disjoint subgroups. Similarly, a set $S$ is support-indecomposable if it cannot be written as $S_1 \cup S_2$ with $S_1, S_2 \neq \emptyset$ both support-disjoint subsets. 

\begin{proposition} \label{prop1}
The subgroups in Eq.(\ref{decompbody}) are independent of the choice of generators (that is, unique) and support-indecomposable (that is `irreducible') when the generating set $S$ satisfies the following two conditions
\begin{itemize}
  \item[$(*)$] $S$ does not contain elements in the form $s=gh$ with
  $g,h \neq 1$ and $g,h$ support-disjoint;
  \item [$(**)$] if a subset $S' \subset S$ generates a subgroup $H
  \le G$ such that $H = H_1 \times H_2$ with $H_1$ and $H_2$ support-disjoint then
  there exits a partition $S'=S_1\cup S_2$ such that $S_i$
  generates $H_i$.
\end{itemize}
\end{proposition}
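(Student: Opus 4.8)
The plan is to separate the two assertions---support-indecomposability (``irreducibility'') of the factors and their independence of the generating set (``uniqueness'')---and to treat uniqueness as the substantive part.

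First I would record the basic fact that for any set $T$ of permutations $\textup{supp}(\langle T\rangle)=\bigcup_{t\in T}\textup{supp}(t)$, since $\textup{supp}(gh)\subseteq\textup{supp}(g)\cup\textup{supp}(h)$ and $\textup{supp}(g^{-1})=\textup{supp}(g)$. Consequently the subgroups $H_i=\langle S_i\rangle$ inherit the support-disjointness of the $S_i$, so (as already noted) Eq.(\ref{decompbody}) is a genuine internal direct product and the projections $\pi_i\colon\textrm{Aut}(\mathscr{G})\to H_i$ are well-defined homomorphisms with $\textup{supp}(g)=\bigsqcup_i\textup{supp}(\pi_i(g))$ for every $g$. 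I would also note that, \emph{given} $S$, the partition $S=S_1\cup\dots\cup S_n$ is itself forced: calling two generators adjacent when their supports meet, the $S_i$ are exactly the connected components of the resulting overlap graph. Hence the only freedom lies in the choice of $S$, and uniqueness means precisely that the resulting subgroups do not depend on which conforming $S$ is used.

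For irreducibility, suppose some factor decomposed as $H_i=K\times L$ with $K,L\neq 1$ support-disjoint. Applying $(**)$ with $S'=S_i$ produces a partition $S_i=A\cup B$ with $\langle A\rangle=K$ and $\langle B\rangle=L$; since $\textup{supp}(A)\subseteq\textup{supp}(K)$ and $\textup{supp}(B)\subseteq\textup{supp}(L)$ are disjoint and $A,B$ are non-empty, this contradicts the support-indecomposability of the set $S_i$. Condition $(*)$ is what makes this mechanism available: if every generator is support-indecomposable as a permutation, then any $s\in H_i=K\times L$ factors as $s=s_Ks_L$ with $s_K,s_L$ support-disjoint, forcing $s\in K$ or $s\in L$; so $(*)$ already yields the partition demanded by $(**)$ (indeed one checks $(*)\Rightarrow(**)$), and each $H_i$ is support-indecomposable.

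The main work is uniqueness, for which I would prove a unique-factorization lemma: \emph{if $\textrm{Aut}(\mathscr{G})=\prod_iH_i$ is any support-disjoint product of non-trivial support-indecomposable subgroups and $\prod_jH'_j$ is any other support-disjoint product, then each $H_i$ lies inside a single $H'_j$.} The argument is the projection computation: for $h\in H_i$ write $h=\prod_j\pi'_j(h)$; since $\textup{supp}(\pi'_j(h))\subseteq\textup{supp}(h)\subseteq\textup{supp}(H_i)$ is disjoint from every $\textup{supp}(H_k)$ with $k\neq i$, reapplying the projections $\pi_k$ forces $\pi'_j(h)\in H_i$. Hence $H_i=\prod_j(H_i\cap H'_j)$ is a support-disjoint decomposition, and support-indecomposability of $H_i$ leaves exactly one non-trivial factor, i.e. $H_i\le H'_{j}$. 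Applying this in both directions to two conforming generating sets $S,\tilde S$---whose factors are support-indecomposable by the previous step---gives mutual containments $H_i\le H'_j\le H_{i'}$, and support-disjointness forces $i=i'$ and $H_i=H'_j$; thus the two families of subgroups coincide. The step I expect to be most delicate is exactly the claim $\pi'_j(h)\in H_i$: it rests on supports behaving additively across commuting support-disjoint factors, so that a projection can never enlarge a support. I would therefore take most care in verifying $\textup{supp}(g)=\bigsqcup_i\textup{supp}(\pi_i(g))$ and $\textup{supp}(\pi_j(x))\subseteq\textup{supp}(x)$; once these are secured the remainder is bookkeeping, but it is the crux that upgrades ``$S$ is partitioned canonically'' to ``the subgroups themselves are canonical''.
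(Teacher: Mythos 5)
Your proposal is correct, and while the irreducibility half matches the paper, your uniqueness argument takes a genuinely different route. For irreducibility the paper argues directly from $(*)$: writing a generator $s=kl$ with $k\in K$, $l\in L$ forces $k=1$ or $l=1$, so $S=(S\cap K)\cup(S\cap L)$ and set-indecomposability gives $S\subseteq K$ or $S\subseteq L$; your variant invokes $(**)$ to produce the contradicting partition, and your side remark that $(*)\Rightarrow(**)$ is accurate (it is essentially the paper's irreducibility computation plus a projection step), so the two mechanisms are the same. The real divergence is uniqueness. The paper stays at the level of generating sets: it generalizes $(**)$ by induction to partition the first generating set $X$ compatibly with the second decomposition, $X=X_1\cup\ldots\cup X_m$ with $\langle X_i\rangle = K_i$, and then matches factors using indecomposability of the parts of $X$ and of the $K_j$ --- with the final step (``since $K_{i_1}$ is support-indecomposable this implies $H_1=K_{i_1}$'') left compressed in the sketch. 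You instead prove an intrinsic unique-factorization lemma: any two support-disjoint decompositions of a permutation group into non-trivial support-indecomposable subgroups coincide, via the support additivity $\mathrm{supp}(g)=\bigcup_i\mathrm{supp}(\pi_i(g))$, the containment $\mathrm{supp}(\pi_j(x))\subseteq\mathrm{supp}(x)$, and the resulting identity $H_i=\prod_j(H_i\cap H'_j)$. This buys you several things: the argument never references the generating sets once indecomposability of the factors is in hand, so canonicity holds for \emph{any} such decomposition however it arises; the factors come out equal as subgroups, strengthening the paper's stated conclusion of isomorphism; and the delicate matching step that the paper's sketch glosses over is fully justified. What the paper's route buys in exchange is brevity given $(**)$ and a closer fit to the algorithmic setting, since $(**)$ is exactly the property guaranteed by the \texttt{nauty} generators. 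Your observation that, for a fixed $S$, the partition into the $S_i$ is forced (connected components of the support-overlap graph) is also a worthwhile precision that the paper leaves implicit.
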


Note that these conditions are ensured if, for example, the \texttt{nauty} algorithm is used to calculate the generators of $\textrm{Aut}(\mathscr{G})$ (see parts (1) and (2) of Theorem 2.34 in \cite{mckay}). The proof of proposition \ref{prop1} can be considered in two parts: irreducibility and uniqueness.  

\begin{proposition}(irreducibility)
Let $S$ be a finite set of permutations and $H$ the group generated by $S$. If $H$ is support-indecomposable as a group, then so is $S$ as a set. The converse is also true when $S$ satisfies $(*)$.
\end{proposition}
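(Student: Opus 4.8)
The plan is to prove the two implications separately, relying on the correspondence between the combinatorial structure of $S$ (support-disjointness of subsets) and the algebraic structure of $H$ (support-disjoint direct factors).

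First I would prove the contrapositive of the forward direction: if $S$ is decomposable as a set, then $H$ is decomposable as a group. Suppose $S = S_1 \cup S_2$ with $S_1, S_2$ nonempty and support-disjoint. Let $H_1 = \langle S_1 \rangle$ and $H_2 = \langle S_2 \rangle$. The key observation is that support-disjointness of the \emph{generators} propagates to the whole subgroups they generate: if every $s \in S_1$ is disjoint from every $t \in S_2$, then $\textup{supp}(H_1)$ and $\textup{supp}(H_2)$ are disjoint, because the support of any product of generators from $S_1$ is contained in $\bigcup_{s \in S_1}\textup{supp}(s)$, and similarly for $S_2$. From support-disjointness it follows immediately that $H_1$ and $H_2$ commute elementwise and intersect trivially (a nontrivial element of $H_1 \cap H_2$ would have support inside both disjoint support sets), so $H = H_1 \times H_2$ is a nontrivial support-disjoint decomposition, i.e. $H$ is support-decomposable. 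Contrapositively, $H$ support-indecomposable forces $S$ support-indecomposable.

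For the converse, assume $S$ satisfies $(*)$ and that $S$ is support-indecomposable as a set; I must show $H$ is support-indecomposable as a group. Again I argue by contraposition: suppose $H = H_1 \times H_2$ with $H_1, H_2$ nontrivial and support-disjoint. Here condition $(**)$ is \emph{not} available (it is only invoked in the full Proposition~\ref{prop1}), so the argument must instead descend to individual generators. The idea is that since $\textup{supp}(H_1)$ and $\textup{supp}(H_2)$ are disjoint, each generator $s \in S$ can be written as $s = s_1 s_2$ where $s_1$ is the part of $s$ acting on $\textup{supp}(H_1)$ and $s_2$ the part acting on $\textup{supp}(H_2)$; concretely $s_i \in H_i$ is obtained by restricting the permutation $s$ to the respective invariant support set. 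These two pieces are support-disjoint, so condition $(*)$ forces one of them to be trivial, meaning every generator $s$ lies entirely in $H_1$ or entirely in $H_2$. This yields a partition $S = S_1 \cup S_2$ into the generators landing in $H_1$ and those landing in $H_2$, which is a support-disjoint decomposition of $S$ with both parts nonempty (nonemptiness because $H_1, H_2$ are nontrivial and generated by $S$), contradicting support-indecomposability of $S$.

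The main obstacle I anticipate is justifying cleanly that each generator $s$ really does split as $s = s_1 s_2$ with $s_i \in H_i$ and $s_i$ support-disjoint, which is where condition $(*)$ earns its keep. This requires checking that the restriction of $s$ to $\textup{supp}(H_1)$ is itself a well-defined permutation — i.e. that $s$ maps $\textup{supp}(H_1)$ into itself — and that this restriction genuinely belongs to $H_1$ rather than merely to the symmetric group on that support. The former follows because $s \in H = H_1 \times H_2$ and elements of $H_2$ fix $\textup{supp}(H_1)$ pointwise, so the $H_1$-component of $s$ governs its action there; the latter is then immediate from the direct product structure. Once this decomposition of individual generators is established, condition $(*)$ applies directly and the remainder of the converse is a routine bookkeeping argument.
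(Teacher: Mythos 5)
Your proof is correct and takes essentially the same route as the paper's: the forward direction is the straightforward propagation of support-disjointness from generators to the subgroups they generate (which the paper simply calls ``clear''), and for the converse you split each generator as $s = s_1 s_2$ along the direct product, apply condition $(*)$ to force every generator into a single factor, and then invoke set-indecomposability of $S$ --- exactly the paper's argument. The only cosmetic difference is that you phrase the converse by contradiction/contraposition, while the paper argues directly from a supposed decomposition $H = K \times L$ to its triviality ($L=1$ or $K=1$); the content is identical.
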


\begin{proof} The first claim is clear. For the converse, suppose that $S=\{s_1,\ldots,s_n\}$ is support-indecomposable as a set but $H = K \times L$ ($K,L$ support-disjoint). Then $s_1 = kl$ for $k\in K$, $l\in L$. By condition $(*)$, $k=1$ or $l=1$, that is, $s_1 \in K$ or $s_1 \in L$, and similarly for $s_2,\ldots,s_n$. Thus $S = (S\cap K) \cup (S\cap L)$. Since $S$ is support-indecomposable as a set, $S\cap K$ or $S\cap L$ is empty, that is, $S \subseteq K$ or $S \subseteq L$. Hence $H=K$ and $L=1$ or $H=L$ and $K=1$.
\end{proof}

\begin{proposition}(uniqueness)
Suppose that $X$ and $Y$ are two sets of generators of a permutation group $G$, with associated direct product decompositions \[
    \begin{array}{rcl}
        G &\cong& H_1 \times \ldots \times H_n,\\
        G &\cong& K_1 \times \ldots \times K_m\,.
    \end{array}
\]
If both $X$ and $Y$ are essential, then $n=m$ and there is a permutation $\sigma$ of the factors such that $H_i \cong K_{\sigma(i)}$ for $i=1,\ldots, n$.
\end{proposition}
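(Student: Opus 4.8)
The plan is to prove the stronger statement that, after reordering, the two decompositions have literally the same factors: $H_i = K_{\sigma(i)}$, which in particular gives $H_i \cong K_{\sigma(i)}$ and $n=m$. Write $X = X_1 \cup \ldots \cup X_n$ and $Y = Y_1 \cup \ldots \cup Y_m$ for the maximal support-disjoint partitions that produce the two decompositions, so that $H_i = \langle X_i \rangle$ and $K_j = \langle Y_j \rangle$, with each $X_i$ and each $Y_j$ support-indecomposable as a set. I may assume every generator is nontrivial (discarding identities changes nothing), so that every factor is nontrivial.

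First I would use condition $(*)$ to locate a single generator inside the opposite decomposition. Fix $x \in X$. Since $G = K_1 \times \ldots \times K_m$ is a support-disjoint direct product, $x$ has a unique expression $x = \prod_j x^{(j)}$ with $x^{(j)} \in K_j$ and the $x^{(j)}$ pairwise support-disjoint. If two or more of the $x^{(j)}$ were nontrivial, then grouping one of them against the product of the rest would exhibit $x$ as a product $gh$ of two nontrivial support-disjoint elements of $G$, contradicting $(*)$. Hence exactly one $x^{(j)}$ is nontrivial, i.e. $x \in K_j$ for a single index $j$; the symmetric argument shows every $y \in Y$ lies in a single $H_i$.

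Next I would promote this from single generators to whole blocks, and then close the loop. Partitioning $X_i$ according to which $K_j$ contains each of its elements expresses $X_i$ as a union of support-disjoint subsets, since elements lying in distinct factors $K_j$, $K_{j'}$ have disjoint supports. As $X_i$ is support-indecomposable as a set, only one part is nonempty, so $X_i \subseteq K_{j(i)}$ for a single index $j(i)$ and hence $H_i = \langle X_i \rangle \subseteq K_{j(i)}$; symmetrically $K_j \subseteq H_{i(j)}$. Combining the inclusions gives $H_i \subseteq K_{j(i)} \subseteq H_{i(j(i))}$, and since distinct factors of the internal direct product $H_1 \times \ldots \times H_n$ meet only in the identity while $H_i \neq 1$, this forces $i(j(i)) = i$. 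Therefore $H_i \subseteq K_{j(i)} \subseteq H_i$, so $H_i = K_{j(i)}$, and $i \mapsto j(i)$, $j \mapsto i(j)$ are mutually inverse bijections; thus $n=m$ and $\sigma = j(\cdot)$ is the required matching.

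The essential step, and the only place the hypotheses really bite, is the second paragraph: condition $(*)$ is exactly what forbids a generator from straddling several factors of the opposite decomposition, and without it the matching collapses (a generator such as $(1\,2)(3\,4)$ could tie together blocks that ought to remain separate). Once each generator is confined to one factor, support-indecomposability of the blocks and the triviality of pairwise intersections of direct factors do the rest. I expect the only subtlety to be the careful handling of nontriviality, needed to guarantee that the indices $j(i)$ and $i(j)$ are genuinely well-defined and mutually inverse.
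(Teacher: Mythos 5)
Your proof is correct, but it takes a genuinely different route from the paper's. The paper's proof leans on condition $(**)$: it first generalizes $(**)$ by induction to decompositions with finitely many factors, applies it to $X$ relative to the decomposition $K_1 \times \ldots \times K_m$ to obtain a repartition of $X$ whose blocks generate the $K_j$, and then matches factors using support-indecomposability of the $K_j$ \emph{as groups} (which itself comes from the irreducibility proposition, hence from $(*)$). You never invoke $(**)$ at all: your key step is to apply $(*)$ directly to a single generator $x \in X$, writing it via the unique factorization in the internal direct product $K_1 \times \ldots \times K_m$ and observing that $(*)$ forbids more than one nontrivial component, so each generator is confined to exactly one opposite factor; after that, only set-level support-indecomposability of the blocks and the trivial pairwise intersection of direct factors are needed. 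This buys you something real: your argument is more self-contained (it does not route through the irreducibility proposition or an inductive strengthening of $(**)$), it runs under a weaker hypothesis (only $(*)$ plus the defining construction of the decomposition), and it delivers the sharper conclusion $H_i = K_{\sigma(i)}$ as literal equality of subgroups rather than isomorphism -- though the paper's matching, once unwound, also gives equality. What the paper's approach buys in exchange is brevity when $(**)$ is taken as a black box, which is reasonable in context since \texttt{nauty} guarantees both $(*)$ and $(**)$ for the generating sets actually used. Your handling of the side conditions (discarding identity generators so that the indices $j(i)$, $i(j)$ are well defined and the factors nontrivial) is the right way to make the bijection argument airtight.
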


\begin{proof}(\emph{sketch}) 
Firstly, generalize condition $(**)$ to a finite number of subgroups $H_1,\ldots,H_n$, by induction on $n$. Then apply this to the first set of generators $X$ with respect to the second decomposition. We then have a partition $X = X_1 \cup \ldots \cup X_m$ such that $X_i$ generates $K_i$ ($1 \le i \le m$). Suppose that $H_1$ is generated by a set $\{x_1,\ldots,x_t\} \subseteq X$. Since $H_1$ and $X_1, \ldots, X_m$ are support-indecomposable, we must have $\{x_1,\ldots,x_t\} \subseteq X_{i_1}$ for some $i_1$. That is, $H_1 \subset K_{i_1}$. Since $K_{i_1}$ is support-indecomposable this implies $H_1 = K_{i_1}$. The same argument applies for $H_2,\ldots,H_n$.
\end{proof}

Thus, the decomposition given in Eq. (\ref{decompbody}) is well-defined if (for example) the \texttt{nauty} algorithm is used. We shall refer to this decomposition as the \emph{geometric} decomposition, and note that it is a simple variation of the Krull-Schmidt factorization into the direct product of indecomposable subgroups\cite{rotman}. A GAP\cite{gap} procedure which calculates the geometric decomposition for an arbitrary permutation group is available from the authors on request.

In general the geometric decomposition is coarser than the Krull-Schmidt decomposition since non-disjoint permutations may still commute. The Krull-Schmidt decomposition may easily be obtained from the geometric decomposition using a computational group theory package such as GAP. The main advantage to using the geometric decomposition is that it provides a computationally efficient way to calculate the structure of large real-world networks and relates more intuitively to graph topology than the Krull-Schmidt factorization.  For all the real-world networks we considered the automorphism group was factorized efficiently using this method, while a direct `brute-force' factorization was not computationally feasible. 

The geometric decompositions of some real-world networks are given in Table \ref{auttable1}. In all cases the geometric factors are either symmetric groups or wreath products of symmetric groups (wreath products are a mild generalization of direct products, see \cite{rotman} for a definition and examples). 

\textsc{Remark}: it is a result of P{\'o}lya that automorphism groups of trees belong to the class of permutation groups which contains the symmetric groups and is closed under taking direct and wreath products\cite{biggs}. Thus, the automorphism groups of many real-world networks belong to the same class of groups as the automorphism groups of trees. Note however, this does not necessarily mean that real-world symmetry is tree-like (for example, the complete graphs also belong to this class). In the following section we relate automorphism group structure to network topology in order to determine the extent to which real-world symmetry is, in fact, tree-like.   

\begin{table}[t]
\begin{center}
\begin{tabular}[c]{l  p{7cm} c} 
Network & $\textrm{Aut}(\mathscr{G})$ & \% Basic Factors \\
\hline
\hline
Human B Cell Genetic Interactions & $C_2^{36} \times S_3^2 \times S_4$ & 97.4\\
\textit{C. elegans} Genetic Interactions & $C_2^{95} \times S_3^{27} \times S_4^9 \times S_5^5 \times S_6^5 \times S_7^2 \times S_8 \times S_9^2 \times S_{10} \times S_{11} \times S_{33} \times (C_2 \wr C_2)$ & 98.7 \\
BioGRID datasets: \\
\hspace{2.15cm} Human & $C_2^{286} \times S_3^{80} \times S_4^{30} \times S_5^{14} \times S_6^{10} \times S_7 \times S_8^2 \times S_9^5 \times S_{10} \times S_{11} \times S_{12}^3 \times S_{15} \times S_{16}^2 \times S_{17} \times S_{23} \times S_{26} \times S_{44}$ & 99.5 \\ 
\hspace{1.4cm} \textit{S. cerevisiae} & $C_2^{42} \times S_3^{8} \times S_4^{5} \times S_5^{2} \times S_6^{2} \times S_7 \times S_{14} \times S_{17}$ & 100 \\
\hspace{1.6cm} \textit{Drosophila} & $C_2^{289} \times S_3^{86} \times S_4^{35}  \times S_5^{19} \times S_6^{11} \times S_7^{10} \times S_8^5 \times S_9^3 \times S_{10}^3 \times S_{11}^3 \times S_{12}^3 \times S_{14} \times S_{16} \times S_{20} \times S_{30}$ & 99.2 \\  
\hspace{1.15cm} \textit{Mus musculus} & $C_2^{7} \times S_3^{4} \times S_4  \times S_5 \times S_6^3 \times S_8 \times S_{10} \times S_{11} \times S_{12} \times S_{26} \times S_{44}$ & 100 \\
Internet (Autonomous Systems Level) & $C_2^{955} \times S_3^{352} \times S_4^{197}  \times S_5^{120} \times S_6^{83} \times S_7^{56}  \times S_8^{55} \times S_9^{47} \times S_{10}^{32} \times S_{11}^{24} \times S_{12}^{14} \times S_{13}^{13} \times S_{14}^{13} \times S_{15}^{8} \times S_{16}^{9} \times S_{17}^{7}  \times S_{18}^3 \times S_{19}^{12} \times S_{20}^{7} \times S_{21}^{10} \times S_{22} \times S_{23} \times S_{24}^6 \times S_{25} \times S_{26}^{3} \times S_{27}^{3}  \times S_{28} \times S_{29}^2 \times S_{30}^{2} \times S_{31}^{2} \times S_{32}^{2} \times S_{33} \times S_{34}^{4} \times S_{35} \times S_{36} \times S_{37}  \times S_{38}^2 \times S_{41} \times S_{42} \times S_{43}  \times S_{46} \times S_{48}^2 \times S_{50} \times S_{51} \times S_{52}^2 \times S_{54} \times S_{56} \times S_{58} \times S_{59} \times S_{60} \times S_{62} \times S_{64} \times S_{70} \times S_{71} \times S_{76} \times S_{79} \times S_{82} \times S_{95} \times S_{112} \times S_{137} \times S_{138} \times S_{147} \times S_{167} \times S_{170} \times S_{194} \times S_{202} \times S_{216} \times S_{276} \times S_{318} \times S_{356} \times (C_2 \wr C_2)^2$ &  98.4 \\
US Power Grid & $C_2^{228} \times S_3^{44} \times S_4^{14} \times S_5^4 \times S_6^2 \times S_7 \times S_9 \times (C_2 \wr C_2)^8 $ & 88.1 \\
\\
\end{tabular}
\end{center}
\caption{\label{auttable1} \small{The geometric decomposition of the automorphism group of some real-world networks, and percentage basic factors (see section \ref{sec:topology}). In all cases, the automorphism group can be decomposed into direct and wreath products of symmetric groups.}}
\end{table}

\section{\label{sec:topology}Automorphism Group Structure and Symmetric Subgraphs}
The \emph{induced subgraph} on a set of vertices $S \in \mathscr{G}$ is the graph obtained by taking $S$ and any edges whose end points are both in $S$. We define a \emph{symmetric subgraph} as the induced subgraph on the support of a geometric factor $H$ (that is, on the points with non-trivial action by $H$). It is natural to ask whether there are any properties of symmetric subgraphs which are generic. 

From table \ref{auttable1} it is clear that most of the geometric factors found in real-world networks are isomorphic to $S_n$ (for some $n$). Furthermore, almost all of these symmetric factors act transitively on their supports. We shall refer to transitive symmetric factors as \emph{basic factors} and associated symmetric subgraphs as \emph{basic symmetric subgraphs} (BSS's). We shall refer to all other factors as \emph{complex factors} and their associated symmetric subgraphs as \emph{complex symmetric subgraphs}. Table \ref{auttable1} shows that, in the cases we considered, almost all factors are basic and therefore that almost all symmetry is due to the presence of basic symmetric subgraphs. 

Since a graph $\mathscr{G}$ on $n$ vertices with $\textrm{Aut}(\mathscr{G}) \cong S_n$ is either empty or complete\cite{lauri} it is immediate that BSS's are also either empty or complete. Furthermore, transitivity ensures that for a given BSS $\mathscr{B}$ and a given vertex $v \in \mathscr{G-B}$, \emph{all} vertices in $\mathscr{B}$ are adjacent to $v$ or \emph{none} are. This means that most real-world symmetry is due to the presence of symmetric \emph{cliques} (complete subgraphs invariant under $\textrm{Aut}(\mathscr{G})$) and symmetric \emph{bicliques} (complete bipartite subgraphs invariant under $\textrm{Aut}(\mathscr{G})$). 

In practice, for all the real-world networks we considered, bicliques other than \emph{stars} (a $k$-star is a subgraph consisting of a vertex of degree $k+1$ adjacent to $k$ vertices of degree $1$), although occasionally present, were rare (see Fig. \ref{pnn} for some examples). In fact, we found that stars were the predominant symmetry structure present in all the networks we considered, although symmetric cliques were also significantly present in a number of networks. For example, the \emph{c. elegans} genetic regulatory network\cite{zhong} -- which was constructed by inferring connections from multiple datasets across multiple organisms and is thus arguably one of the most well-characterized biological networks available -- contains multiple symmetric cliques, including one on $33$ vertices corresponding to the largest geometric subgroup in the decomposition of its automorphism group. This example (and those in Fig. \ref{pnn}) illustrate the fact that although much real-world symmetry is tree-like (and thus can be related to generic growth processes) a certain degree is not. In particular, a significant proportion of real-world symmetry originates in symmetric cliques. Since cliques and bicliques are topologically very similar (they are both complete multipartite graphs), the presence of symmetric cliques in complex networks may derive from similar growth processes to those that produce stars in combination with local clustering. 

Fig. \ref{example} gives a typical arrangement of symmetric subgraphs (basic and complex) found in many real world networks, illustrating the relationship between these symmetric subgraphs and the structure of the network automorphism group. Since complex symmetric subgraphs can potentially take \emph{any} form it is not possible to say anything general about their structure. However, since they are rare they may be considered on a case-by-case basis. Fig. \ref{pnn} shows the complex symmetric subgraphs present in the US power grid, illustrating that in some real-world networks a certain degree of complex symmetry is present.

\begin{figure}[t]
\begin{center}
\includegraphics[width=0.7\textwidth]{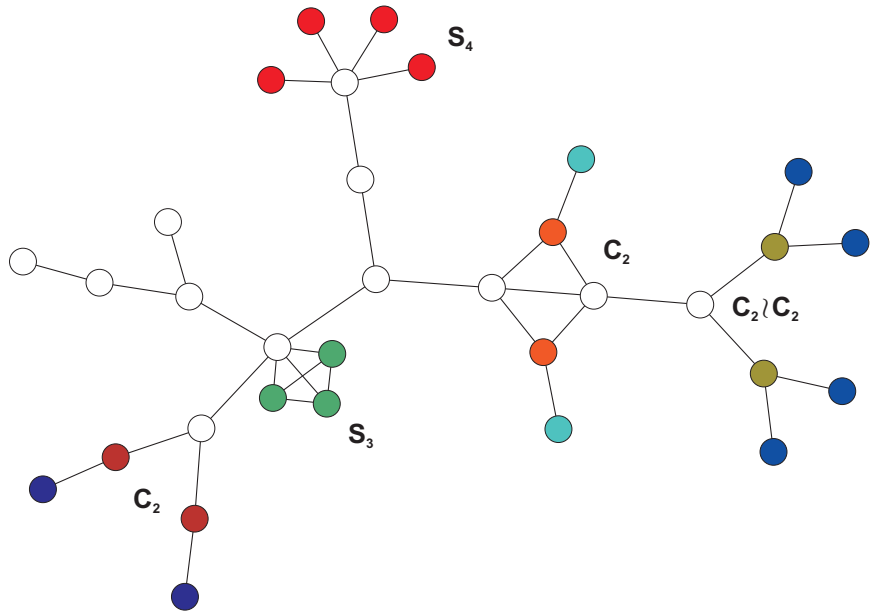}
\end{center}
\caption{\label{example}\small{\textbf{A typical arrangement of symmetric subgraphs}. The geometric decomposition of the automorphism group of this graph is $\textrm{Aut}(\mathscr{G}) \cong C_2^2 \times S_3 \times S_4 \times (C_2 \wr C_2)$. This example illustrates how different symmetric subgraphs contribute to the automorphism group, as well as showing common `non-treelike' real-world symmetry. In particular note the 4-star (red) and the 3-clique (green) which correspond to the factors $S_4$ and $S_3$ respectively in the geometric decomposition of $\textrm{Aut}(\mathscr{G})$. We found that wreath product factors generally associate with extended branches (see the far right of this figure), although this is not always the case (see the starred subgraph in Fig. \ref{pnn} for example). Vertices are colored by orbit, fixed points are in white.}}
\end{figure}

\begin{figure}[t]
\begin{center}
\includegraphics[width=0.75\textwidth]{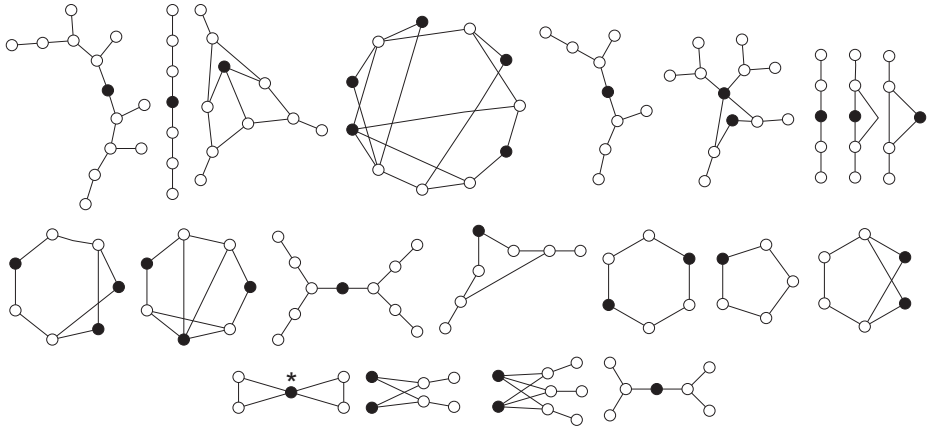}
\caption{\label{pnn}\small{\textbf{Complex symmetric subgraphs in the US power grid.} Vertices in white correspond to those in the symmetric subgraphs. Vertices in black are those adjacent to those in the symmetric subgraph, and are shown to clarify subgraph structure. The starred subgraph has automorphism group $C_2 \wr C_2$, illustrating that wreath products do not associate exclusively with extended branches such as in the example in Fig. \ref{example}}}
\end{center}
\end{figure} 

\section{\label{sec:conclusions} Conclusions}
We have considered the automorphism groups of a variety of real-world networks and have found that a certain degree of symmetry is ubiquitous. We have constructed a practical decomposition of the automorphism groups of these networks (the geometric decomposition), and found that the automorphism groups can typically be decomposed into direct and wreath products of symmetric groups. We have shown that each geometric factor can be associated with a symmetric subgraph, and demonstrated that most factors can be related to either a symmetric clique or symmetric biclique. Thus, we find that these two types of subgraph generically account for almost all real-world network symmetry. 

We anticipate that an important manifestation of real-world symmetry may be its effect on network \emph{behavior}. For example, many biological regulatory networks are not simply static but rather can be associated with an underlying dynamical system:
\begin{equation} \label{dynsys}
\frac{d x_i}{dt} = f_i(A_{ij}x_j),
\end{equation}
where $x_i$ is the state of the $i$th species (for example gene, protein etc.) and $A_{ij}$ is the network adjacency matrix. In many cases, while network topology may be known (from reverse engineering of microarray data, for instance) the specific form of the coupling functions $f_i$ are unknown or experimentally unverified (or unverifiable). In these circumstances it is useful to know if there are any model \emph{independent} properties of the dynamical system (that is properties which derive from network structure and are exhibited by a variety of coupling functions). It is here that symmetry is useful since, for equivariant dynamical systems (those which remain `unchanged' under the action of a given symmetry group), dynamic behavior can often be related in a generic way to symmetry structure. More precisely \emph{`the symmetries of a system imply a `catalogue' of typical forms of behavior from which the actual behavior is `selected''}\cite{golubitsky}. Since the symmetry group of Eqs. (\ref{dynsys}) is necessarily a subgroup of the underlying network automorphism group, knowledge of the network automorphism group is a first step toward classifying typical forms of behavior that a given network may exhibit. This has been achieved for certain simple systems including those with $S_n$ symmetry\cite{elmhirst,golubitsky} but has yet to be extensively investigated for more complex systems such as the ones we consider here. In light of these advances, and the fact that automorphism groups of real-world networks commonly have a rather simple generic form -- they can be decomposed into direct and wreath products of symmetric groups -- we anticipate that determination of a catalogue of forms of behavior for some complex networks may be feasible.

\section{Acknowledgments}
This work was funded by the EPSRC and by a London Mathematical Society scheme 6 grant. 

\bibliographystyle{amsplain}
\bibliography{autpapers_new}

\providecommand{\bysame}{\leavevmode\hbox to3em{\hrulefill}\thinspace}
\providecommand{\MR}{\relax\ifhmode\unskip\space\fi MR }
\providecommand{\MRhref}[2]{%
  \href{http://www.ams.org/mathscinet-getitem?mr=#1}{#2}
}
\providecommand{\href}[2]{#2}
\begin{thebibliography}{10}

\bibitem{albertSM}
R.~Albert and A.-L. Barab{\'a}si, \emph{Statistical mechanics of complex
  networks}, Rev. Modern Phys. \textbf{74} (2002), no.~1, 47--97.

\bibitem{barabasiPA}
A.-L. Barab{\'a}si and R.~Albert, \emph{Emergence of scaling in random
  networks}, Science \textbf{286} (1999), no.~5439, 509--512.

\bibitem{basso}
K.~Basso, A.~A. Margolin, G.~Stolovitzky, U.~Klein, R.~Dalla-Favera, and
  A.~Califano, \emph{Reverse engineering of regulatory networks in human b
  cells}, Nat. Genet. \textbf{37} (2005), no.~4, 382--390.

\bibitem{pajek}
V.~Batagelj and A.~Mrvar, \emph{Pajek - program for large network analysis},
  Connections \textbf{21} (1998), no.~2, 47--57.

\bibitem{biggs}
N.~Biggs, \emph{Algebraic graph theory}, Cambridge University Press, London,
  1974.

\bibitem{bollobasMGT}
B.~Bollob{\'a}s, \emph{Modern graph theory}, Graduate Texts in Mathematics,
  vol. 184, Springer-Verlag, New York, 1998.

\bibitem{bollobasRG}
\bysame, \emph{Random graphs}, second ed., Cambridge Studies in Advanced
  Mathematics, vol.~73, Cambridge University Press, Cambridge, 2001.

\bibitem{elmhirst}
T.~Elmhirst, \emph{$s_n$-equivariant symmetry-breaking bifurcations}, Int. J.
  Bifurcat. Chaos. \textbf{14} (2004), 1017--1036.

\bibitem{foggia}
P.~Foggia, C.~Sansone, and M.~Vento, \emph{A performance comparison of five
  algorithms for graph isomorphism}, Proc. of the 3rd IAPR TC-15 Workshop on
  Graph-based Representations in Pattern Recognition (2001), 188--199.

\bibitem{gap}
The GAP~Group, \emph{Gap -- groups, algorithms, and programming, version 4.4},
  2006, \texttt{{h}ttp://www.gap-system.org}.

\bibitem{golubitsky}
M.~Golubitsky and I.~Stewart, \emph{The symmetry perspective}, Progress in
  Mathematics, vol. 200, Birkh\"auser Verlag, Basel, 2002.

\bibitem{caida}
The~CAIDA Group, \emph{The {CAIDA AS} relationships dataset}, 19/06/2006,
  \texttt{http://www.caida.org/data/active/as-relationships}.

\bibitem{harary}
F.~Harary and E.~M. Palmer, \emph{The probability that a point of a tree is
  fixed}, Math. Proc. Cambridge Philos. Soc. \textbf{85} (1979), no.~3,
  407--415.

\bibitem{albert3}
H.~Jeong, B.~Tombor, R.~Albert, Z.~N. Oltvai, and A.~L. Barab{\'a}si, \emph{The
  large-scale organization of metabolic networks}, Nature \textbf{407} (2000),
  651--654.

\bibitem{lauri}
J.~Lauri and R.~Scapellato, \emph{Topics in graph automorphisms and
  reconstruction}, London Mathematical Society Student Texts, vol.~54,
  Cambridge University Press, Cambridge, 2003.

\bibitem{mckay}
B.~D. McKay, \emph{Practical graph isomorphism}, Congr. Numer. \textbf{30}
  (1981), 45--87.

\bibitem{milo}
R.~Milo, S.~Shen-Orr, S.~Itzkovitz, N.~Kashtan, D.~Chklovskii, and U.~Alon,
  \emph{Network motifs: Simple building blocks of complex networks}, Science
  \textbf{298} (2002), no.~5594, 824--827.

\bibitem{newman2}
M.~E.~J. Newman, \emph{Scientific collaboration networks {I}: Network
  construction and fundamental results}, Phys. Rev. E \textbf{64} (2001),
  no.~1, 016131.

\bibitem{newman3}
\bysame, \emph{Scientific collaboration networks. {II}: Shortest paths,
  weighted networks, and centrality}, Phys. Rev. E \textbf{64} (2001), no.~1,
  016132.

\bibitem{newman}
\bysame, \emph{The structure and function of complex networks}, SIAM Rev.
  \textbf{45} (2003), no.~2, 167--256.

\bibitem{rotman}
J.~J. Rotman, \emph{An introduction to the theory of groups}, fourth ed.,
  Graduate Texts in Mathematics, vol. 148, Springer-Verlag, New York, 1994.

\bibitem{siganos}
G.~Siganos, M.~Faloutsos, P.~Faloutsos, and C.~Faloutsos, \emph{Power laws and
  the as-level internet topology}, IEEE/ACM Transactions on Networking
  \textbf{11} (2003), no.~4, 514--524.

\bibitem{biogrid}
C.~Stark, B.-J. Breitkreutz, T.~Reguly, L.~Boucher, A.~Breitkreutz, and
  M.~Tyers, \emph{Bio{GRID}: a general repository for interaction datasets},
  Nucleic Acids Res. \textbf{34} (2006), D535--D539.

\bibitem{strogatz}
S.~H. Strogatz, \emph{Exploring complex networks}, Nature \textbf{410} (2001),
  268--276.

\bibitem{wattsSW}
D.~J. Watts and S.~H. Strogatz, \emph{Collective dynamics of `small-world'
  networks.}, Nature \textbf{393} (1998), no.~6684, 440--442.

\bibitem{zhong}
W.~Zhong and P.~W. Sternberg, \emph{Genome-wide prediction of c. elegans
  genetic interactions}, Science \textbf{311} (2006), no.~5766, 1481--1484.

\end{thebibliography}

\end{document}